\theoremstyle{definition} % Define theorem styles here based on the definition style (used for definitions and examples)
\theoremstyle{plain} % Define theorem styles here based on the plain style (used for theorems, lemmas, propositions)
\theoremstyle{remark}
\theoremstyle{plain}
\newtheorem*{theorem*}{Theorem}
\newcolumntype{C}[1]{>{\centering\arraybackslash}p{#1}}
\newcommand\bovermat[2]{%
	\makebox[0pt][l]{$\smash{\overbrace{\phantom{%
					\begin{matrix}#2\end{matrix}}}^{\text{#1}}}$}#2}
\newcommand\scalemath[2]{\scalebox{#1}{\mbox{\ensuremath{\displaystyle #2}}}}
\title{\normalfont\spacedallcaps{Markov Chain-based Cost-Optimal Control Charts
		for Healthcare Data}}
\author{\spacedlowsmallcaps{Bal\'{a}zs Dobi \& Andr\'{a}s Zempl\'{e}ni} \\
	{\small Department of Probability Theory and Statistics, E\"{o}tv\"{o}s Lor\'{a}nd University,
		Budapest, Hungary} \\
	{\small Faculty of Informatics, University of Debrecen, Debrecen, Hungary}}
\date{}
\begin{document}
	
	\bibliographystyle{unsrt}
	%----------------------------------------------------------------------------------------
	%	HEADERS
	%----------------------------------------------------------------------------------------
	
	\renewcommand{\sectionmark}[1]{\markright{\spacedlowsmallcaps{#1}}}
	\lehead{\mbox{\llap{\small\thepage\kern1em\color{halfgray}
				\vline}\color{halfgray}\hspace{0.5em}\rightmark\hfil}}
	
	\pagestyle{scrheadings}
	
	%----------------------------------------------------------------------------------------
	%	TABLE OF CONTENTS & LISTS OF FIGURES AND TABLES
	%----------------------------------------------------------------------------------------
	
	\maketitle
	\thispagestyle{empty}
	\setcounter{tocdepth}{2}
	\tableofcontents
	%----------------------------------------------------------------------------------------
	%	Abstract
	%----------------------------------------------------------------------------------------
	\section*{Abstract}
	Control charts have traditionally been used in industrial statistics, but are
	constantly seeing new areas of application, especially in the age of Industry
	4.0. This paper introduces a new method, which is suitable for applications in
	the healthcare sector, especially for monitoring a health-characteristic of a
	patient. We adapt a Markov chain-based approach and develop a method in which
	not only the shift size (i.e. the degradation of the patient's health) can be
	random, but the effect of the repair (i.e. treatment) and time between samplings
	(i.e. visits) too. This means that we do not use many
	often-present assumptions which are usually not applicable for medical
	treatments. The average cost of the protocol, which is determined by the time
	between samplings and the control limit, can be estimated using the stationary
	distribution of the Markov chain.
	
	Furthermore, we incorporate the standard deviation of the cost into the
	optimisation procedure, which is often very important from a process control
	viewpoint. The sensitivity of the optimal parameters and the resulting average
	cost and cost standard deviation on different parameter values is investigated.
	We demonstrate the usefulness of the approach for real-life data of patients
	treated in Hungary: namely the monitoring of cholesterol level of patients with
	cardiovascular event risk. The results showed that the optimal parameters from
	our approach can be somewhat different from the original medical parameters.
	\vskip 0.5cm
	\noindent KEYWORDS: control chart; cost-effectiveness; Markov-chain;
	healthcare

	\clearpage
	%----------------------------------------------------------------------------------------
	%	Introduction
	%----------------------------------------------------------------------------------------
	\pagenumbering{arabic} 
	\section{Introduction}
	
	Statistical process control, and with it control charts enjoy a wide range of
	use today, and have seen great developments, extensions and generalisations
	since the original design of Shewhart.\cite{Montgomery} This proliferation of
	methods and uses can at part be attributed to the fact that information is
	becoming available in ever increasing quantities and in more and more areas.
	
	Even though control charts have originally been designed with statistical
	criteria in mind, the development of cost-efficient control charts also began
	early. Duncan in his influential work from 1956 developed the basis for a cycle
	based cost-effective control chart framework which is still very popular today
	and is implemented in statistical software packages such as in
	\textbf{\textsf{R}}.\cite{Duncan,Mortarino,Zhu} Cost-efficient solutions are
	often the focus in industrial and other settings besides statistical optimality.
	
	Statistical process control, including control charts can be found in very
	different environments in recent literature. For example in mechanical
	engineering, e.g. Zhou et al.\cite{Zhou}, where the authors develop
	a $T^2$ bootstrap control chart, based on recurrence plots. Another example is
	the work of Sales et al.\cite{Sales} at the field of chemical engineering, in
	their work they use multivariate tools for monitoring soybean oil
	transesterification.  It is not a surprise that the healthcare sector has seen
	an increased use of control chart too. Uses within this sector range from
	quality assurance to administrative data analysis and patient-level monitoring,
	and many more.\cite{Thor,Suman} In one example control charts were used as part
	of system engineering methods applied to healthcare delivery.\cite{Padula}
	Other examples include quality monitoring in thyroid surgery\cite{Duclos},
	monitoring quality of hospital care with administrative data\cite{Coory} and
	chronic respiratory patient control\cite{Correia}.
	
	Cost-efficient control charts have not been widely used in healtcare settings,
	but there are some examples which deal with cost monitoring and
	management. In one work $\overline{X}$ and $R$ charts were used to assess the
	effect of a physician educational program  on the hospital resource
	consumption\cite{Johnson}. Another article is about a hospital which used
	control charts for monitoring day shift, night shift and daily total staffing
	expense variance, among other variables\cite{Shaha}. Yet another
	paper documents a case study about primary care practice performance, where
	control charts were used to monitor costs associated with provider productivity
	and practice efficiency. Further costs monitored were net patient revenue per
	relative value unit, and provider and non‐provider cost as a percent of net
	revenue\cite{Stewart}. Even though these studies used control charts for
	cost monitoring or optimisation purposes, they did not deal with the same
	problems as this paper, as our method focuses on cost-optimisation by finding
	the optimal parameters of the control chart setup.
	
	The aim of this study is to present a cost-efficient control chart framework
	which was specifically designed with use on healthcare data in mind.
	Specifically, we would like to use control charts for the purposes of analysing
	and controlling a healthcare characteristic of a patient over time, such as the
	blood glucose level. Traditionally, the minimal monitoring and process cost is
	achieved by finding the optimal parameters, namely the sample size, time between
	samplings and critical values\cite{Montgomery}. If one desires to
	find the optimal parameters for a cost-optimal control chart for a healthcare
	process, then the proper modelling of the process is critical, since  if the
	model deviates greatly from reality, then the resulting parameters and costs may
	not be appropriate.
	Of course this presents several problems as certain typical assumptions
	in control chart theory will not hold for these kind of processes.
	Namely the assumption of fixed and known shift size, which is problematic
	because healthcare processes can often be drifting in nature and can produce
	different shift sizes. Another assumption is the perfect repair, which in this
	case means that the patient can always be fully healed, which is obviously
	often impossible.
	Lastly, the time between shifts is usually set to be constant, but in a
	healthcare setting with human error and non-compliance involved, this also
	needs to be modified. Furthermore, since these processes can have undesirable
	erratic behaviour, the cost standard deviation also needs to be taken into
	account besides the expected cost.
	
	The control chart design which is presented here for dealing with these
	problems is the Markov chain-based framework. This framework was developed by
	Zempléni et al. in 2004 and was successfully applied on
	data collected from a Portuguese pulp plant\cite{Zempleni}. Their article
	provided suitable basis for generalisations.
	Cost-efficient Markov chain-based control charts build upon Duncan's cycle
	based model, as these also partition the time into different segments for cost
	calculation purposes. The previous paper from 2004\cite{Zempleni} has already
	introduced the random shift size element, in this paper we expand upon that idea and develop
	further generalisations to create cost-efficient Markov chain-based control
	charts for healthcare data.
	
	The article is organized in the following way: Section 2 starts with
	basic definitions and notions needed to understand the more complicated
	generalisations. Subsection 2.2 discusses the mathematics behind
	the aforementioned new approaches in the Markov chain-based framework.
	Subsection 2.3 deals with the problems introduced by discretisation - which is needed for the
	construction of the transition matrix - and also explains the calculation of
	the stationary distribution, and defines the cost function.
	Analysis of results and examples will be provided with the help of program
	implementations written in \textbf{\textsf{R}}. Section 3 shows an example of
	use for the new control chart methods in healthcare settings. This example
	involves the monitoring of low-density lipoprotein levels of patients at risk of
	cardiovascular events. Section 4 concludes the theoretical and applied results.
	
	%----------------------------------------------------------------------------------------
	%	Methods
	%----------------------------------------------------------------------------------------
	
	\section{Methods}
	
	The methods and \textbf{\textsf{R}} implementation in this paper largely depend
	on the works of Zempléni et al. The first part of this section gives
	a brief introduction to the methods used in their paper.\cite{Zempleni}
	%------------------------------------------------
	\subsection{The Markov-chain-based Framework}
	Consider a process which is monitored by a control chart. We will
	show methods for processes which can only shift to one (positive)
	direction, monitored by a simple $X$-chart, with sample size $n=1$. This aims to
	model the monitoring of a characteristic where the shift to only one direction
	is interesting or possible, and the monitoring is focused at one patient at a
	time.
	Several assumptions are made for the base model.
	The process distribution is normal with known parameters $\mu_0$ and
	$\sigma$. We will denote its cumulative distribution function by $\phi$.
	The shift intensity $1/s$ is constant and known and the shift size $\delta^*$ is
	fixed. It is also assumed that the process does not repair itself, but when
	repair is carried out by intervention, it is perfect. The repair is treated as
	an instantaneous event. All costs related to repairing should be included in the repair cost,
	for example if a major repair entails higher cost, then this should also be
	reflected in the calculation. The time
	between shifts is assumed to be exponentially distributed. The above
	assumptions ensure that the probabilities of future transitions are only
	dependent on the current state. This is the so-called Markov property, and the
	model under consideration is called a Markov chain.
	The states of this Markov chain are defined at the sampling times and the type
	of the state depends on the measured value and the actual (unobservable)
	background process, namely whether there was a shift from the target value in
	the parameter. This way four basic state types are defined:
	\begin{itemize}
		\item No shift - no alarm: in-control (INC)
		\item Shift - no alarm: out-of-control (OOC)
		\item No shift - alarm: false alarm (FA)
		\item Shift - alarm: true alarm (TA)
	\end{itemize}
	A graphical representation of this process can be seen on Figure 1.
	\begin{figure}[ht]
		\caption{Definition of States}
		\includegraphics[scale=0.75]{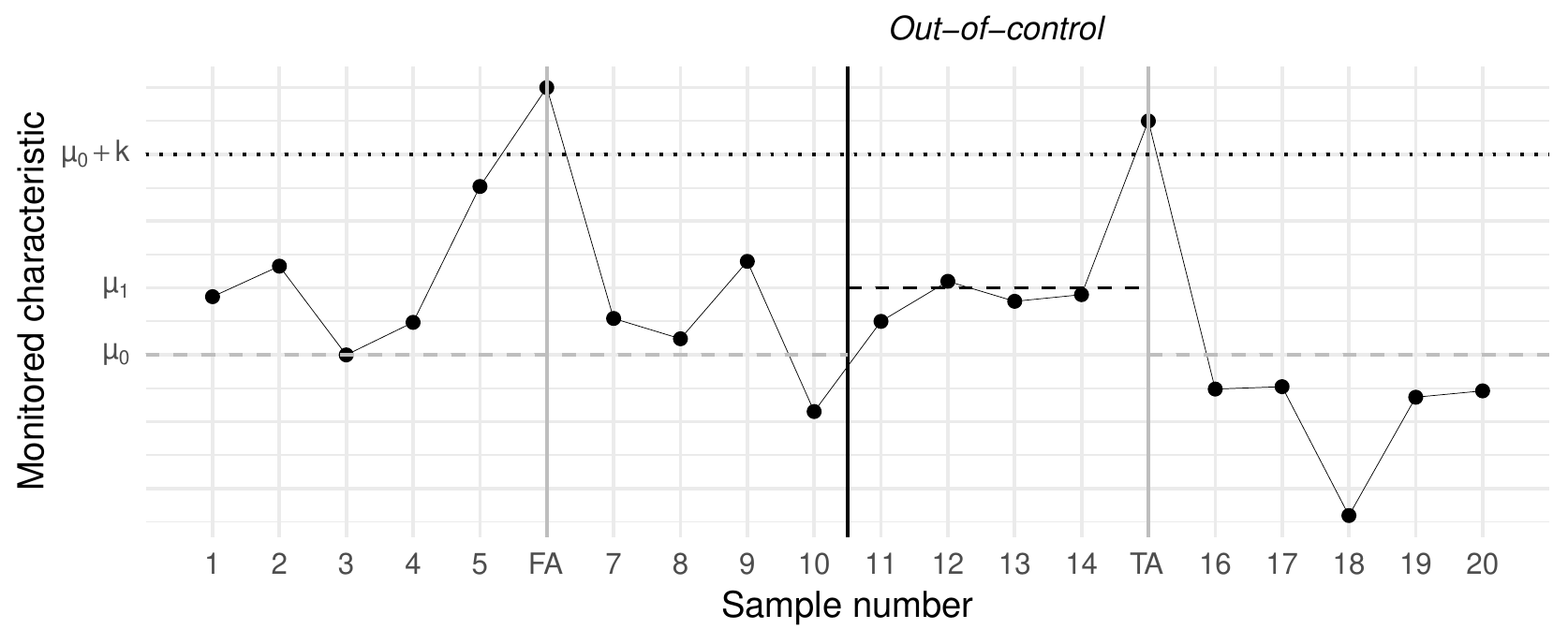}
		\centering
	\end{figure}
	The transition matrix of this Markov chain can be seen below.
	\vspace{5mm}
	\begin{center}
		$\begin{matrix}
		\scalemath{1}{
			\begin{bmatrix}
			\bovermat{In-control}{ (1-F(h))\phi(k)}  &
			\bovermat{Out-of-control}{F(h))\phi(k-\delta^*)} & \bovermat{False
				alarm}{(1-F(h))(1-\phi(k))}  & \bovermat{True
				alarm}{F(h)(1-\phi(k-\delta^*))} \\
			0 & \phi(k-\delta^*) & 0 & 1-\phi(k-\delta^*) \\
			(1-F(h))\phi(k) & F(h))\phi(k-\delta^*) & (1-F(h))(1-\phi(k)) &
			F(h)(1-\phi(k-\delta^*)) \\
			(1-F(h))\phi(k) & F(h))\phi(k-\delta^*) & (1-F(h))(1-\phi(k)) &
			F(h)(1-\phi(k-\delta^*))
			\end{bmatrix}}
		\end{matrix}$
	\end{center}
	Here $F()$ is the cumulative distribution function of the exponential
	distribution with expectation $1/s$, where $s$ is the expected number of shifts
	in a unit time interval. $h$ is the time between consecutive observations and
	$\phi$ is the cumulative distribution function of the process.
	$k$ is the control limit and $\delta^*$ is the size of the shift.
	After calculating the stationary distribution of this Markov chain, one can
	define a cost function for the average total cost:
	\begin{equation}
	E(C) =\frac{c_s + p_3c_f + p_4c_r}{h} + p_2c_o + p_4c_oB
	\end{equation}
	where $p_i$ is the probability of state $i$, in the stationary distribution.
	$c_s$, $c_f$, $c_o$ and $c_r$ are the cost of sampling, false alarm,
	out-of-control operation and repair respectively. $B$ is the fraction of time
	between consecutive observations where the shift occurred and remained
	undetected\cite{Mortarino}:
	\begin{equation}
	B=\frac{hse^{hs} - e^{hs} + 1}{hs(e^{hs} - 1)} \nonumber
	\end{equation}
	An economically optimal chart, based on the time between samplings and the
	control limit can be constructed by minimising the cost function.
	%------------------------------------------------
	
	\subsection{Generalisation}
	The previous simple framework can be used for more
	general designs. Zempléni et al.\cite{Zempleni} used this method to set up
	economically optimal control charts where shifts are assumed to have a random
	size and only the distribution of the shift size is known. This
	means that the shift size is no longer assumed to be fixed and known, which is important in
	modeling various processes not just in healthcare, but in industrial or
	engineering settings too. This requires expanding the two shifted states to
	account for different shift sizes.
	Let $\tau_i$ denote the random shift times on the
	real line and let $\rho_i$ be the shift size at time $\tau_i$. Let the
	probability mass function of the number of shifts after time $t$ from the start
	be denoted by $\nu_t$. $\nu_t$ is a discrete distribution with support over
	$\mathbb{N}^0$. Assume that $\rho_i$ follows a continuous distribution, which
	has a cumulative distribution function with support over $(0, \infty)$, and that
	the shifts are independent from each other and from $\tau_i$.
	If the previous conditions are met, the resulting random process has monotone
	increasing trajectories between samplings, which are step functions. The
	cumulative distribution function of the process values for a given time $t$
	from the start can be written the following way:
	\begin{equation}
	Z_t(x)=
	\begin{dcases}
	0 &{\text{if }} x < 0, \\
	\nu_t(0) + \sum_{k=1}^\infty \nu_t(k) \Psi_k(x) &{\text{if }} x \geq 0
	\end{dcases}
	\end{equation}
	where $\Psi_k()$ is the cumulative distribution function of the
	sum of $k$ independent, identically distributed $\rho_i$ shift sizes.
	The $x=0$ case means there is no shift. The probability of zero shift size is
	just the probability of no shift occurring, which is $\nu_t(0)$.
	
	Let us assume now that the shift times form a homogeneous Poisson process, and
	the shift size is exponentially distributed, independently of previous
	events. The choice of the exponential distribution was motivated by the 
	tractability of its convolution powers as a gamma distribution. The cumulative
	distribution function of the shift size by (2) is then:
	\begin{equation}
	Q_t(x)=
	\begin{dcases}
	0 &{\text{if }} x < 0, \\
	n_t(0) + \sum_{k=1}^\infty n_t(k) Y_k(x) &{\text{if
	}} x \geq 0
	\end{dcases}
	\end{equation}
	where $n_t$ is the probability mass function of the Poisson distribution, with
	parameter $ts$ - the expected number of shifts per unit time
	multiplied by the time elapsed - representing the number of shifts in the time
	interval $(0;t)$.
	$Y_k$ - the shift size cumulative distribution function, for $k$ shift events -
	is a special case of the gamma distribution, the Erlang distribution
	$E(k,\frac{1}{\delta})$, which is just the sum of $k$ independent exponential
	variates each with mean $\delta$.
	
	The framework can be generalised even further by not assuming perfect repair
	after a true alarm signal. This means that the treatment will not have
	perfect results on the health of the patient, or - in industrial or engineering
	settings - that the machines cannot be fully repaired to their original
	condition.  In this case, the imperfectly repaired states act as out-of-control
	states. It is assumed that the repair cannot worsen the state of the process,
	but an imperfectly repaired process will still cost the same as an equally
	shifted out-of-control process, thus repaired and out-of-control states do not
	need distinction during the cost calculation. Different approaches can be
	considered for modeling the repair size distribution. The one applied here uses
	a random variable to determine the distance from the target value after repair.
	A natural choice for the distribution of this random variable could be the beta
	distribution since it has support over $[0,1]$ - the portion of the distance
	compared to the current one after repair. Also, the
	flexible shape of its density function can model many different repair
	processes. Because of these considerations we will assume that the proportion of
	the remaining distance from $\mu_0$ after repair - $R$ - is a
	$Beta(\alpha,\beta)$ random variable, with known parameters.
	
	Yet another generalisation is the random sampling time. In certain environments,
	the occurrence of the sampling at the prescribed time is not always guaranteed.
	For example in healthcare, the patient or employee compliance can have
	significant effect on the monitoring, thus it is important to take this into
	account during the modeling too. Here it is modeled in a way, that the sampling
	is not guaranteed to take place - e.g. the patient may not show up for control
	visit. This means that the sampling can only occur according to the sampling
	intervals, for example at every $n$th days, but is not a guaranteed event. One
	can use different approaches when modeling the sampling probability, here we
	consider two cases. The first one assumes that too frequent samplings will
	decrease compliance. This assumption is intended to simulate the situation in which too
	frequent samplings may cause increased difficulty for the staff or the patient
	- leading to decreased compliance.
	The probability of a successful sampling as a function of the prescribed time
	between samplings is modeled using a logistic function:
	\begin{align}
	T^*_h=\frac{1}{1+e^{-q(h-z)}} \nonumber
	\end{align}
	where $q>0$ is the steepness of the curve, $z \in \mathbb{R}$ is the value of
	the sigmoid's midpoint and $h$ is the time between samplings.
	
	In the other approach it is assumed that too
	frequent samplings will decrease compliance and increased distance from
	the target value will increase compliance. This assumption means that a heavily
	deteriorated process or health state will ensure a higher compliance.
	The probability of a successful sampling as a function of the prescribed time
	between samplings and the distance from the target value is modeled using
	a modified beta distribution function:
	\begin{align}
	T^*_h(w) &= P\Bigg(W_h<\frac{w+\zeta^*}{V+2\zeta^*}\Bigg) \nonumber
	\end{align}
	where $W_h$ is a beta distributed $Beta($a$/h,$b$)$ random variable, $w$ is the
	distance from the target value, $h$ is the time between samplings, $V$ is the
	maximum distance from $\mu_0$ taken into account - the distance where we expect
	maximal possible compliance.
	The shifts in the values of $w$ and $V$ are needed to acquire probabilities
	strictly between $0$ and $1$, since deterministic
	behaviour is undesirable even in extreme cases. These shifts are parametrised by
	the $\zeta^*>0$ value, which should typically be a small number.
	It is important to note, that these are just two ways of modeling the sampling
	probability. Other approaches and distributions may be more appropriate
	depending on the process of interest.
	
	The shift size distribution, the repair size distribution and the sampling
	probability, together with the control chart let us model and monitor the
	behaviour of the process.
	The resulting process is monotone increasing between samplings and has a
	downward "jump" at alarm states - as the repair is assumed to be
	instantaneous. Usually a wide range of different cost types are associated with
	the processes and their monitoring, these include the costs associated with
	operation outside the target value. Since the operator of the control chart
	only receives information about the process at the time of the sampling, the
	proper estimation of the process behaviour between samplings is critical.
	Previously, at the perfect repair and non-stackable, fixed shift size model,
	this task was reduced to estimating the time of shift in case of a true alarm or
	out-of-control state, since otherwise the process stayed at the previous value
	between samplings. The estimation of the process behavior with random shift
	sizes and random repair is more difficult.
	
	The expected out-of-control operation cost can be written as the expectation of
	a function of the distance from the target value. At (2) the shift size
	distribution was defined for a given time $t$, but this time we are
	interested in calculating the expected cost for a whole interval. We propose
	the following calculation method for the above problem:
	
	\newtheorem*{prop}{Proposition}
	
	\begin{prop}
		Let $H_j$ be a random process whose trajectories are monotone increasing step
		functions defined by $\tau_i$ shift times and $\rho_i$ shift sizes as in (2),
		with starting value $j \geq 0$.
		
		The expected value of a function of $H_j$ over an $\epsilon$ long interval
		can be written as:
		\begin{equation}
		E_\epsilon(f(H_j)) = \frac{\bigintsss_{t_0}^{t_0+\epsilon} \bigintsss_{0}^{\infty}
			1-Z_t(f^{-1}(x-j)) dx dt}{\epsilon}
		\end{equation}
		where $Z_t()$ is the shift size cumulative distribution function given at (2),
		$f()$ is an invertible, monotonic increasing function over the real line and
		$t_0$ is the start point of the interval.
	\end{prop}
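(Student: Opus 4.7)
The plan is to combine the classical tail-integral (``layer cake'') representation of expectations with a straightforward time averaging, using the monotonicity of $f$ to push the inverse inside the CDF.

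First I would exploit the pathwise decomposition $H_j(t) = j + S_t$, where $S_t := \sum_{i:\, \tau_i \le t} \rho_i$ is the accumulated shift whose marginal law at time $t$ is exactly the distribution with CDF $Z_t$ given in (2). Because $f$ is invertible and strictly monotone increasing, for every $x \ge 0$ the event $\{f(H_j(t)) \le x\}$ equals $\{H_j(t) \le f^{-1}(x)\}$, so
\[
P\bigl(f(H_j(t)) \le x\bigr) \;=\; P\bigl(S_t \le f^{-1}(x)-j\bigr) \;=\; Z_t\!\bigl(f^{-1}(x) - j\bigr).
\]
Treating $f(H_j(t))$ as nonnegative (the natural assumption in the cost interpretation, $f \ge 0$), the layer-cake identity $E[X] = \int_0^\infty (1 - P(X \le x))\, dx$ yields
\[
E\bigl[f(H_j(t))\bigr] \;=\; \int_0^\infty \bigl(1 - Z_t(f^{-1}(x)-j)\bigr)\, dx.
\]

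The statement then follows by forming the time average on $[t_0, t_0+\epsilon]$ and swapping the two integrations. Writing
\[
E_\epsilon(f(H_j)) \;=\; \frac{1}{\epsilon}\int_{t_0}^{t_0+\epsilon} E\bigl[f(H_j(t))\bigr]\, dt,
\]
substituting the tail representation, and interchanging the order of integration produces the displayed double integral.

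The only step requiring a second look is the interchange of integrals; since the integrand $1 - Z_t(\cdot)$ is nonnegative, Tonelli's theorem applies unconditionally, so the swap is routine as soon as one knows the resulting value is finite (again guaranteed by the cost interpretation). A cosmetic point I would verify before writing up the final version is the placement of the shift by $j$ inside the CDF: the derivation sketched above produces $Z_t(f^{-1}(x)-j)$, which I would read as the intended meaning of the expression in the statement. The main conceptual obstacle is therefore not really an obstacle at all, and the whole argument reduces to: \textbf{(i)} identify the distribution of $H_j(t)$ as a shift of $Z_t$, \textbf{(ii)} invoke monotonicity of $f$ to replace $f(H_j(t))$ by $H_j(t)$ inside the probability, and \textbf{(iii)} average over $t$ via Tonelli.
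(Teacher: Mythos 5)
Your proof is correct and follows essentially the same route as the paper's: both identify the inner integral as the tail-integral (layer-cake) representation of $E\bigl[f(H_j(t))\bigr]$ at fixed $t$ and then average over $[t_0,t_0+\epsilon]$; you are merely more explicit about the monotone substitution and the Tonelli interchange, which the paper leaves implicit. Your side remark is also right: the derivation yields $Z_t(f^{-1}(x)-j)$ rather than the printed $Z_t(f^{-1}(x-j))$, and the paper's own evaluation in (5), which computes $E\bigl[(X+j)^2\bigr]$ under the Erlang law, confirms that this is the intended reading.
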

	\begin{proof}
		Let us observe that the inner integral in the numerator - $\int_{0}^{\infty}
		1-Z_t(f^{-1}(x-j)) dx$ - is just the expected value of a function of the shift
		size at time $t$, since we know that if $X$ is a non-negative random variable,
		then $E(X)=\int_0^\infty (1-F(x))dx$, where $F()$ is the cumulative distribution
		function of $X$.
		In other words, the inner integral in the numerator is the expected value the
		process $f(H_j)$, $t$ time after the start. Furthermore, observe that this
		expected value is a continuous function of $t$. We are looking for the
		expectation of $\int_{0}^{\infty} 1-Z_t(f^{-1}(x-j)) dx$ over
		$[t_0,t_0+\epsilon]$, which is (4).
	\end{proof}
	
	For practical purposes we can apply the previous, general proposition to our
	model of Poisson-gamma mixture shift size distribution, thus we assume that the
	shift size distribution is of the form of (3). The connection between the
	distance from the target value and the resulting cost is often assumed not to be
	linear: often a Taguchi-type loss function is used - the loss is assumed to be
	proportional to the squared distance, see e.g. the book of Deming\cite{Deming}.
	Applying this to the above proposition means $f(x)=x^2$. Since we are
	interested in the behaviour of the process between samplings, $t_0=0$ and
	$\epsilon=h$, thus:
	\begin{align}
	E_h(H_j^2) &= \frac{\bigintsss_{0}^{h}  \Big[e^{-ts}j^2 + 
		\Big(\sum_{k=1}^\infty \frac{(ts)^k
			e^{-ts}}{k!} \cdot \bigintsss_{0}^{\infty}  (x+j)^2 \frac
		{(1/\delta)^{k}x^{k-1}e^{-x/\delta}}{(k-1)!} dx \Big) \Big] dt}{h} \nonumber \\
	&= \frac{\bigintsss_{0}^{h}  e^{-ts}j^2 + \sum_{k=1}^\infty \frac{(ts)^k
			e^{-ts}}{k!}\big(k\delta^2+(k\delta + j)^2\big) dt}{h} =
	\frac{\bigintsss_{0}^{h} 2 \delta^2 ts + (\delta ts + j)^2dt}{h} \nonumber \\
	&= hs\delta \Bigg(\delta + \frac{hs\delta}{3} + j\Bigg) + j^2
	\end{align}
	where first we have used the law of total expectation - the condition being
	the number of shifts within the interval. If there is no shift, then the
	distance is not increased between samplings, this case is included by the
	$e^{-ts}j^2$ term before the inner integral. Note that the inner integral is
	just $E(X+j)^2$ for a gamma - namely an Erlang($k,\frac{1}{\delta}$) -
	distributed random variable. When calculating the sum, we used the known
	formulas for $E(N^2)$, $E(N)$ and the Poisson distribution itself - where $N$
	is a Poisson($ts$) distributed random variable.

	\subsection{Implementation}
	\paragraph{Discretisation} For cost calculation
	purposes we would like to find a discrete stationary distribution which approximates the distribution of the monitored characteristic
	at the time of samplings. This requires the discretisation of the above defined
	functions, which in turn will allow us to construct a discrete time Markov
	chain with discrete state space.
	
	A vector of probabilities is needed to represent the shift size
	probability mass function $q_t()$ during a sampling interval:
	\begin{equation}
	q_t(i)=
	\begin{dcases}
	n_t(0) &{\text{if }}i=0, \\
	\sum_{k=1}^\infty n_t(k) \big(Y_k(i\Delta) - Y_k((i-1)\Delta)\big)
	&{\text{if }} i \in \mathbb{N}^{+} \nonumber
	\end{dcases}
	\end{equation}
	where $\Delta$ stands for the length of an interval, one unit after
	discretisation. For $i=0$ the function is just the probability of no shift
	occurring.
	
	The discretised version of the repair size distribution can be written the
	following way:
	\begin{equation}
	R(l,m) = P\Bigg(\frac{m}{l+1/2}\le R <\frac{m+1}{l+1/2}\Bigg) \nonumber
	\end{equation}
	where $l$ is the number of discretised distances closer to $\mu_0$ than the
	current one - including $\mu_0$. $m$ is the index for the repair size
	segment we are interested in, with $m=0$ meaning perfect repair ($m\leq l$).
	The repair is assumed to move the expected value towards the target value by a
	random percentage, governed by $R()$. Even though discretisation is
	required for practical use of the framework, in reality the repair size
	distribution is continuous. To reflect this continuity in the background, the
	probability of perfect repair is set to be $0$ when the repair is random. $l$
	is set to be $0$ when there is no repair, meaning $R(0,m)\equiv1$.
	The $1/2$ terms are necessary for correcting the overestimation of the
	distances from the target value, introduced by the discretisation: in reality,
	the distance can fall anywhere within the discretised interval, without
	correction the maximum of the possible values would be taken
	into account, which is an overestimation of the actual shift size. After
	correction the midpoint of the interval is used, which can still be somewhat
	biased, but in practical use with fine disretisation this effect is negligible.
	
	When the sampling probability depends on the time between samplings only, the
	model is unchanged, since both the time between samplings and the sampling
	probability can be continuous. However, when the probability also depends
	on the shift size, discretisation is required here as well:
	\begin{equation}
	T_h(v) = P\Bigg(W_h<\frac{v+\zeta}{V_d+\zeta}-\frac{1}{2(V_d+\zeta)}\Bigg)
	\nonumber
	\end{equation}
	now $v$ is the state distance from the target value in discretised units, and
	$V_d$ is the number of considered intervals - discretised shift sizes. The
	$\frac{1}{2(V_d+\zeta)}$ term is necessary for correcting the overestimation of
	the distances from the target value. The denominators contain simply $V_d+\zeta$
	instead of $V_d+2\zeta$, because $v+\zeta$ is already strictly
	smaller than $V_d+\zeta$, since the smallest discretised state is 0 and thus the
	greatest is $V_d-1$. This ensures that the probability can never reach 1.
	Example curves for the successful sampling probability can be seen in Figure 2.
	It shows that longer time between samplings and greater distances from the
	target value increase the probability of successful sampling.
	\begin{figure}[ht]
		\caption{Sampling probabilities for  $q=8$, $z=0.5$
			on the left, and for $\alpha=1$, $\beta=3$, $V_d=100$,
			$\zeta=1$ on the right}
		\includegraphics[scale=0.75]{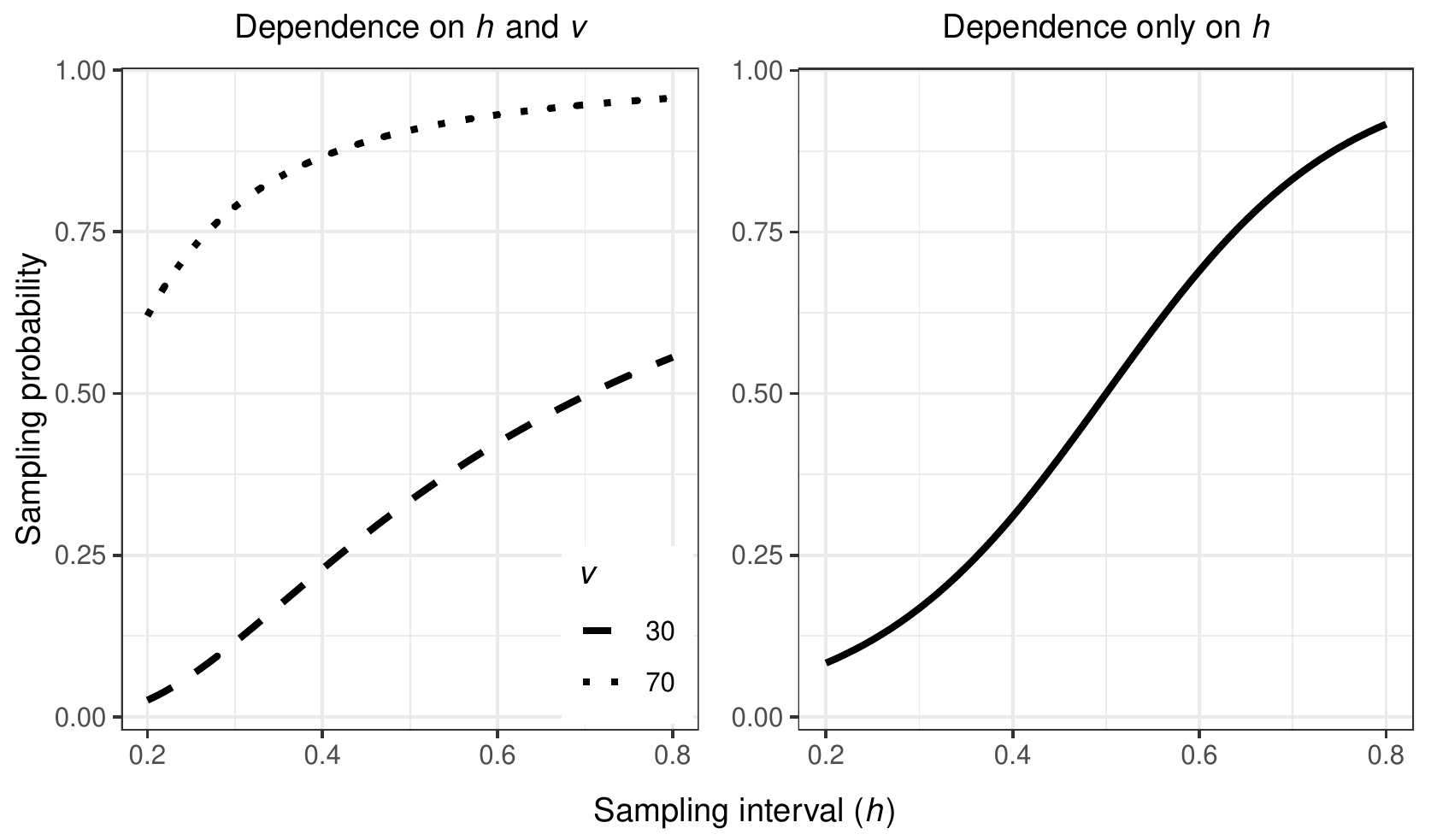}
		\centering
	\end{figure}
	
	\paragraph{Transition Matrix and Stationary Distribution} The transition
	probabilities can be written using the $\phi()$ process distribution, the
	$q_t()$ shift size distribution, the $R()$ repair size distribution and the
	$T_h()$ sampling probability. For the ease of notation let us define the $S()$
	and $S'()$ functions, the first for target states without alarm and the latter
	for target states with alarm:
	\begin{align}
	S(g,v,l)& = \big[T_h(v) \phi(k-\Delta'(v)) + (1-T_h(v))\big]
	\smashoperator[lr]{\sum_{m\in\mathbb{N}^{0}, m\leq l, m\leq v}^{}}
	q_h(g-m)R(l,m) \nonumber \\
	S'(g,v,l)& = T_h(v) \big[1-\phi\big(k-\Delta'(v)\big)\big]
	\smashoperator[lr]{\sum_{m\in\mathbb{N}^{0}, m\leq l, m\leq v}^{}}
	q_h(g-m)R(l,m) \nonumber
	\end{align}
	where $\Delta'(v)$ is a function defined as
	\begin{equation}
	\Delta'(v)=
	\begin{dcases}
	0 &{\text{if }}v=0, \\
	i\Delta-\frac{\Delta}{2} &{\text{if }} v = 1,\hdots,V_d-1 \nonumber
	\end{dcases}
	\end{equation}
	$k-v\Delta$ would simply be the critical value
	minus the size of the shift in consideration. The $-\frac{\Delta}{2}$ term is
	added, because without it, the shift size would be overestimated. The total
	number of discretised disctances is $V_d$, thus the number of non-zero distances
	is $V_d-1$. $g$ is the total shift size possible in discretised units when moving
	to a given state, $v$ is the distance, measured as the number of discretised
	units from the target value, $l$ is the index of the actual partition and $m$
	is the index for the repair size segment we are interested in with $m=0$
	meaning perfect repair. Note, that as before, $l$ is set to be $0$ when
	there is no repair, meaning $R(0,m)\equiv1$.
	$S()$ and $S'()$ are functions which combine the three distributions into the
	probability we are interested in. The possible values of $m$ are restricted
	this way, because the parameter of the $q_h()$ function must be non-negative.
	A more intuitive explanation is, that we assumed positive shifts and a negative
	$g-m$ would imply a negative shift.
	It can be seen that the probability of a state without an alarm - the $S'()$
	function - is increased with the probability of unsuccessful sampling -
	the $1- T_h(v)$ term. Of course it is still possible to not receive an alarm
	even though the sampling successfully occurred.
	
	Using the $S()$ and $S'()$ functions one can construct the transition
	matrix:
	\begin{center}
		\begin{equation}
		\Pi=
		\begin{matrix}
		\scalemath{0.9}{
			\begin{bmatrix}
			\bovermat{In-control}{S(0,0,0)}  & \bovermat{Out-of-control}{S(1,1,0) & S(2,2,0) &
				\dots} & \bovermat{False alarm}{S'(0,0,0)}  &
			\bovermat{True alarm}{S'(1,1,0) &
				S'(2,2,0) & \dots}\\
			0  & S(0,1,0) & S(1,2,0)
			& \dots & 0 & S'(0,1,0) &
			S'(1,2,0) & \dots
			\\
			0 & 0 & S(0,2,0) &
			\dots & 0 & 0 & S'(0,2,0) & \dots
			\\
			\vdots & \vdots & \vdots & & \vdots & \vdots & \vdots & \\
			S(0,0,0) & S(1,1,0) & S(2,2,0) & \dots & S'(0,0,0) & S'(1,1,0) &
			S'(2,2,0) & \dots 
			\\
			0 & S(1,1,1) & S(2,2,1) & \dots & 0 & S'(1,1,1) &
			S'(2,2,1) & \dots
			\\
			0 & S(1,1,2) & S(2,2,2) & \dots & 0 & S'(1,1,2) &
			S'(2,2,2)  & \dots
			\\
			\vdots & \vdots & \vdots & & \vdots & \vdots & \vdots & \ddots
			\end{bmatrix}
		}
		\scalemath{0.66}{
			\begin{aligned}
			&\left.\begin{matrix}
			\\[0.1em]
			\end{matrix} \right\} %
			\\
			&\left.\begin{matrix}
			\\
			\\[1.8em]
			\\
			\end{matrix}\right\}%
			\\
			&\left.\begin{matrix}
			\\[0.1em]
			\end{matrix}\right\}%
			\\
			&\left.\begin{matrix}
			\\
			\\[1.7em]
			\\
			\end{matrix}\right\}%
			\\
			\end{aligned} \nonumber
		}
		\end{matrix}
		\end{equation}
	\end{center}
	The size of the matrix is $2V_d\times2V_d$ since every shift size has two
	states:
	one with and one without alarm. The first $V_d$ columns are states without
	alarm, the second $V_d$ are states with alarm. One can observe, that once the
	process leaves the healthy state it will never return. This is due to the nature of the
	imperfect repair we have discussed.
	
	The transition matrix above defines a Markov chain with a discrete, finite state
	space with one transient, inessential  class (in-control and false alarm states)
	and one positive recurrent class (out-of-control and true alarm states). The
	starting distribution is assumed to be a deterministic distribution
	concentrated on the in-control state, which is to say that the process is
	assumed to always start from the target value. In finite Markov chains, the
	process leaves such a starting transient class with probability one. The
	problem of finding the stationary distribution of the Markov chain is thus
	reduced to finding a stationary distribution within the recurrent classes of
	the chain.
	Since there is a single positive recurrent class which is also aperiodic, we
	can apply the Perron–Frobenius theorem to find the stationary
	distribution\cite{Meyer}:
	\begin{theorem*}
		Let $A$ be an $n \times n$, irreducible matrix with non-negative
		elements, $x\in \mathbb{R}^n$ and $$\lambda_0 = \lambda_0(A) = sup \{ \lambda :
		\exists x > 0 :
		Ax \geq \lambda x \}.$$ Then the following statements hold:
		\begin{itemize}
			\item [1)] $\lambda_0$ is an eigenvalue of $A$ with algebraic multiplicity of
			one, and its corresponding eigenvector $x_0$ has strictly positive elements.
			\item [2)] The absolute value of all other eigenvalues is less than or equals
			$\lambda_0$.
			\item [3)] If $A$ is also aperiodic, then the absolute value of all other
			eigenvalues is less than $\lambda_0$.
		\end{itemize}
	\end{theorem*}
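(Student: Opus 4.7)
The plan is to prove this via the Collatz--Wielandt characterization combined with a compactness argument. First I would note that the set $L = \{\lambda \geq 0 : \exists x > 0, Ax \geq \lambda x\}$ is nonempty (it contains $0$) and bounded above by $\max_i \sum_j a_{ij}$ (test with $x = \mathbf{1}$), so $\lambda_0$ is finite. The first task is to show that the supremum is attained and corresponds to a genuine eigenvector. I would rewrite $\lambda_0 = \sup_{x > 0} r(x)$ with $r(x) = \min_i (Ax)_i / x_i$, and then use the key observation that for any nonnegative nonzero $y$, the vector $(I+A)^{n-1} y$ is strictly positive by irreducibility, together with the inequality $r((I+A)^{n-1} y) \geq r(y)$. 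This lets me restrict the optimization to the compact set $\{x > 0 : \|x\|_1 = 1, x_i \geq \eta\}$ for a suitable $\eta$, on which $r$ is continuous, so a maximizer $x_0$ exists.

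Next, for part (1), I would show that $x_0$ is a true eigenvector with eigenvalue $\lambda_0$. If the residual $y := Ax_0 - \lambda_0 x_0$ were nonzero (it is nonnegative by construction), then $(I+A)^{n-1} y > 0$ by irreducibility, and setting $z = (I+A)^{n-1} x_0 > 0$ yields $Az - \lambda_0 z > 0$ componentwise, so one can find $\lambda' > \lambda_0$ with $Az \geq \lambda' z$, contradicting the maximality. Hence $Ax_0 = \lambda_0 x_0$, and strict positivity of $x_0$ follows from $x_0 = \lambda_0^{-1} A x_0 \leq (I+A)^{n-1} x_0 / \lambda_0^{n-1}$ arguments (or directly: if some coordinate were zero, applying $(I+A)^{n-1}$ yields a contradiction). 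For algebraic simplicity I would invoke the same theorem on $A^\top$ to get a strictly positive left eigenvector $y_0^\top A = \lambda_0 y_0^\top$, and then suppose there exists a generalized eigenvector $v$ with $(A - \lambda_0 I) v = x_0$; pairing with $y_0$ gives $y_0^\top x_0 = y_0^\top (A - \lambda_0 I) v = 0$, contradicting $x_0, y_0 > 0$. Geometric simplicity is easier: any other real eigenvector $w$ for $\lambda_0$ can be offset by $c x_0$ so that $x_0 + c w \geq 0$ has a zero coordinate, and irreducibility then forces it to be the zero vector.

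For part (2), for any complex eigenvalue $\mu$ with eigenvector $u \in \mathbb{C}^n$, the triangle inequality gives $|\mu|\, |u| = |Au| \leq A |u|$ componentwise, so $|u|$ lies in the Collatz--Wielandt feasible set for $|\mu|$, yielding $|\mu| \leq \lambda_0$. For part (3), in the aperiodic case, suppose $|\mu| = \lambda_0$ with $\mu \neq \lambda_0$. Equality in the triangle inequality at every coordinate, combined with strict positivity of $A|u|$'s support, forces $|u|$ to be a positive multiple of $x_0$ and the phases $u_j = e^{i\theta_j}|u_j|$ to satisfy $e^{i\theta_i} \mu / \lambda_0 = e^{i\theta_j}$ whenever $a_{ij} > 0$. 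Iterating around any directed cycle of length $\ell$ gives $(\mu/\lambda_0)^\ell = 1$; since aperiodicity means the gcd of cycle lengths is $1$, we conclude $\mu/\lambda_0 = 1$, a contradiction.

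The main obstacle I anticipate is \emph{algebraic} (rather than merely geometric) simplicity in part (1): the duality trick using the left Perron eigenvector is clean but requires proving existence of $y_0 > 0$ first, so the argument has to be organized so that parts of the theorem applied to $A^\top$ are available before algebraic simplicity for $A$ is asserted. The aperiodic phase argument for part (3) is conceptually straightforward but bookkeeping-heavy, as it requires showing that the cycle-length constraint genuinely follows from $a_{ij} > 0$ relations propagating through the strongly connected graph of $A$.
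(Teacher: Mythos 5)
The paper does not actually prove this statement: it is the classical Perron--Frobenius theorem, quoted with a citation to Meyer's textbook and used as a black box to justify the existence, uniqueness and positivity of the stationary distribution of $\Pi'$. So there is no in-paper argument to compare against, and your proposal has to stand on its own. It does: what you describe is the standard Collatz--Wielandt (Wielandt) proof, and each step is a correct and complete route --- attainment of the supremum via a compactness argument using $(I+A)^{n-1}$, the residual argument showing the maximizer is a genuine eigenvector with $Ax_0=\lambda_0 x_0$ and $x_0>0$, geometric simplicity by driving a nonnegative combination to a zero coordinate, algebraic simplicity by pairing a putative generalized eigenvector with the strictly positive left eigenvector of $A^\top$, the triangle-inequality bound $|\mu|\,|u|\le A|u|$ for part (2), and the phase-propagation-around-cycles argument for part (3), where the subgroup of integers $m$ with $(\mu/\lambda_0)^m=1$ contains all cycle lengths and hence their gcd, which aperiodicity makes equal to $1$. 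Two details to tighten: testing with $x=\mathbf{1}$ shows $\min_i\sum_j a_{ij}\in L$ (a lower bound), whereas the upper bound $\lambda_0\le\max_i\sum_j a_{ij}$ comes from evaluating $Ax\ge\lambda x$ at a coordinate where $x$ attains its maximum; and the restriction to a set of the form $\{x:\|x\|_1=1,\ x_i\ge\eta\}$ is not available for a single a priori $\eta$ --- the clean fix is to maximize $r$ over the compact set $(I+A)^{n-1}\{y\ge 0:\|y\|_1=1\}$, which consists of strictly positive vectors, on which $r$ is continuous, and over which the supremum is not decreased thanks to your inequality $r((I+A)^{n-1}y)\ge r(y)$. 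With those repairs the outline is a complete proof, and it is essentially the argument found in the reference the paper cites.
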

	One can apply this theorem to find the stationary distribution of $\Pi$. If we
	consider now $\Pi$ without the inessential class - let us denote it with $\Pi'$
	- then $\lambda_0(\Pi') = \lambda_0(\Pi'^T) = 1$.
	Moreover, the stationary distribution - which is the left eigenvector of $\Pi'$,
	normalised to sum to one - is unique and exists with strictly positive elements.
	Finding the stationary distribution is then reduced to solving the following
	equation:
	$\Pi'^T f_0 = f_0$,
	where $f_0$ is the left eigenvector of $\Pi'$. This amounts to solving $2V_d-2$
	equations - the number of states minus the in-control and false alarm states -
	for the same number of variables, so the task is easily accomplishable.
	The stationary distribution is then:
	\begin{equation}
	P = \frac{f_0}{\sum _{i{\mathop {=}}1}^{2V_d-2}f_{0_i}}.  \nonumber
	\end{equation}
	
	\paragraph{Cost Function} Using the stationary
	distribution, the expected cost can be calculated:
	\begin{equation}
	E(C) = c_s\frac{1}{h}(T' \cdot P) + \frac{\sum_{i=1}^{V_d-1} \big(c_{rb} +
		c_{rs}\Delta'^2(i)\big) P_{r_i}}{h} + c_o (A^2 \cdot P) \nonumber
	\end{equation}
	This cost function incorporates similar terms as previously $(1)$. The first
	term deals with the sampling cost: $T' = \{ T_h(1), T_h(2),\ldots,T_h(V_d-1),
	T_h(1), T_h(2),\ldots,T_h(V_d-1)\}$ is the vector of successful sampling
	probabilities repeated in a way to fit the length and order of the stationary distribution.
	This first term uses the expected time between samplings, $\frac{1}{h}(T' \cdot
	P)$, instead of simply $h$, which would just be the minimal possible time
	between samplings.
	The second term deals with the repair costs and true alarm probabilities. $P_{r_i}$ is the true alarm
	probability for shift size $i$. The repair cost is partitioned into a base and
	shift-proportionate part:
	$c_{rb}$ and $c_{rs}$. The true alarm probability is used, since it is assumed
	that repair occurs only if there is an alarm. The last term is the average cost
	due to operation while the process is shifted. The connection between the
	distance from the target value and the resulting cost is assumed to be
	proportional to the squared distance.
	This is modeled using the $A^2$ vector, which contains the weighted averages of
	the expected squared distances from the target value between samplings:
	\begin{equation}
	A^2_i = \sum_{j=1}^{i} E_h\big(H_{\Delta'(j)}^2\big) M_{ij} \nonumber
	\end{equation}
	where $E_h\big(H_{\Delta'(j)}^2\big)$ is calculated using (5).  $j$ indicates
	one of the possible starting distances immediately after the sampling, and $i$
	indicates the state - shift - of the process at the current sampling.
	$M_{ij}$ is the probability that $\Delta'(j)$ will be the starting distance
	after the sampling, given that the current state is $i$. These probabilities can
	be written in a matrix form:
	\begin{center}
		\begin{equation}
		\begin{matrix}
		\scalemath{0.9}{
			\begin{bmatrix}
			\bovermat{Distance from the target value starting from 0}{\ \ \ \ \ 0 \
				\ \ \ \ \ & \ \ \ \ 1 \ \ \ \ \ & \ \ \ \ 0 \ \ \ \ \ & \ \ \ \ \ 0 \ \ \
				\ \ \ & \ \ \dots \ \ \ }
			\\
			0 & 0 & 1 & 0 &
			\dots
			\\
			0 & 0 & 0 & 1 & \dots 
			\\
			\vdots & \vdots & \vdots & \vdots & 
			\\
			0 & 1 & 0 & 0 & \dots  
			\\
			0 & R(1,0) & R(1,1) & 0 & \dots 
			\\
			0 & R(2,0) & R(2,1) & R(2,2) & \dots 
			\\
			\vdots & \vdots & \vdots & \vdots & 
			
			\end{bmatrix}
		}
		\scalemath{0.63}{
			\begin{aligned}
			&\left.\begin{matrix}
			\\
			\\
			\\
			\\[1.5em]
			\\
			\end{matrix}\right\} \text{Out-of-control}%
			\\
			&\left.\begin{matrix}
			\\
			\\
			\\
			\\[1.5em]
			\\
			\end{matrix}\right\} \text{True alarm}%
			\\
			\end{aligned} \nonumber
		}
		\end{matrix}
		\end{equation}
	\end{center}
	It can be seen, that when the process is
	out-of-control without alarm, the distance is not changed. The probabilities for
	the alarm states are calculated using the $R()$ repair size distribution.
	
	So far only the expected cost was considered during the optimisation. In certain
	fields of application the reduction of the cost standard deviation can be
	just as or even more important than the minimisation of the expected cost.
	Motivated by this, let us consider now the weighted average of the cost
	expectation and the cost standard deviation:
	\begin{equation}
	G = pE(C) + (1-p)\sigma(C) \nonumber
	\end{equation}
	Now $G$ is the value to be minimised and $p$ is the
	weight of the expected cost ($0\le p\le 1$). The cost
	standard deviation can easily be calculated by modifying the cost function
	formula. All of the previous models can be used without any significant change,
	one simply changes the value to be minimised from $E(C)$ to $G$.
	
	\subsection{Comparison of Different Scenarios}
	Implementation of the methods was done using the \textbf{\textsf{R}} programming
	language.
	Supplying all the necessary parameters, one can calculate the $G$ value of
	the process for one time unit. It is
	also possible to minimise the $G$ value by finding the optimal time between
	samplings and control limit. All the other parameters
	are assumed to be known. The optimization step can be carried out
	using different tools, the results presented here were obtained with the
	built-in \texttt{optim()} \textbf{\textsf{R}} function: box-constrained
	optimization using PORT routines\cite{Gay}, namely
	the Broyden-Fletcher-Goldfarb-Shanno (L-BFGS-B) algorithm. The optimisation
	procedure can be divided into three steps.
	First, the transition matrix needs to be constructed from the given parameters.
	After this, the stationary distribution of the Markov chain is computed.
	In the third step, the $G$ value is calculated using the stationary
	distribution and the cost function. The optimisation algorithm then
	checks the resulting $G$ value and iterates the previous steps
	with different time between sampling and/or control limit parameters until it
	finds the optimal ones.
	\paragraph{Dependence on Parameters} The testing was done
	using a moderate overall compliance level. This is required because low 
	compliance can result in extreme behaviour in the optimisation such as taking
	the maximum or minimum allowed parameter value. An example for this is when the
	sampling probability depends on both the time between samplings and the state
	of the process: if the compliance level is also relatively low, then given
	certain parameter setups the optimal time between samplings will tend to zero.
	This will essentially create a self-reporting system as the increased distance
	from the target value will eventually increase the compliance and the sampling
	will take place. In a healthcare environment this would mean that the patient
	is told to come back for a control visit as soon as possible, but the patient
	will show up only when the symptoms are severe enough.
	This kind of process behaviour is undesirable in many cases, for example when
	the time between samplings cannot be set arbitrarily.
	
	The results obtained are shown
	in Figure 3.
	\begin{figure}[ht]
		\caption{Optimal parameters and the resulting expected cost and cost standard
			deviation as function of the process standard deviation ($\sigma$),
			out-of-control cost ($c_o$), expected shift size ($\delta$) and weight
			parameter ($p$) for $s=0.2$, $\alpha=1$, $\beta=3$, a$=0.01$, b$=1$,
			$c_s=1$, $c_{rb}=10$, $c_{rs}=10$ \\
			Top: $\sigma=0.1$, Bottom: $\sigma=1$}
		\includegraphics[scale=0.75]{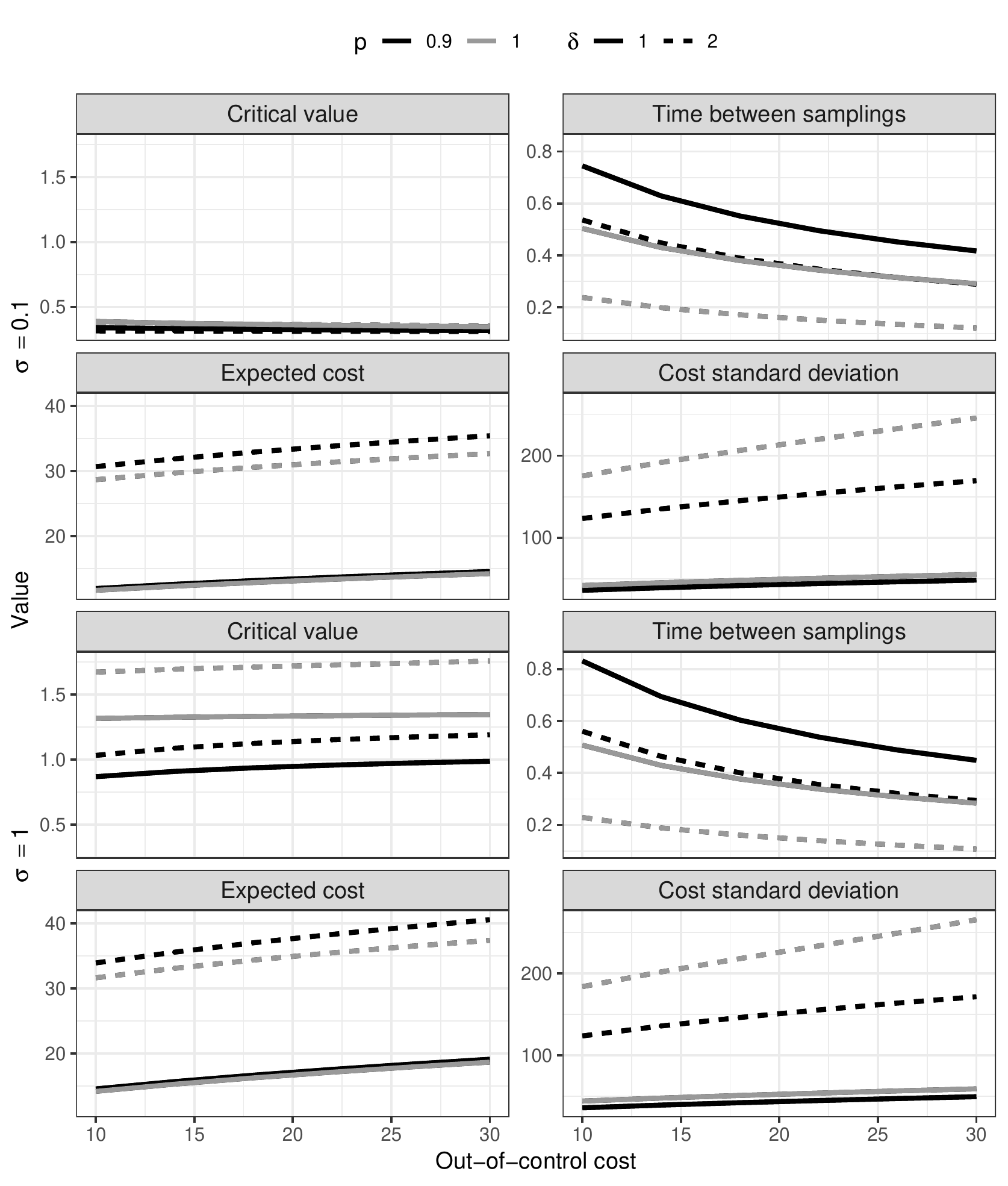}
		\centering
	\end{figure}
	One may observe the weak dependence of the critical value on the
	out-of-control cost. The time between samplings should be decreased with the
	increase of the out-of-control cost, and the average cost and the cost standard
	deviation increase with the out-of-control cost, as expected.
	The effect of the expected shift size on the critical value depends on the
	process standard deviation, as increased shift size results in markedly
	increased critical values only in the case of $\sigma=1$. Higher expected shift
	size entails less time between samplings, and increased expected cost and cost
	standard deviation.
	One can assess, that if the cost standard deviation is taken
	into account during the optimisation procedure ($p=0.9$), then
	lower critical value should be used with increased time between samplings. This
	is logical, because the increased time between sampling will lead to less
	frequent interventions, thus a less erratic process. Of course, at the same
	time we do not want to increase the expected cost, so the critical value is
	lowered. The cost standard deviation is decreased, as expected. What is
	interesting to note is that the expected costs have barely increased compared
	to the $p=1$ setup. This is important, because it shows that by changing the
	parameters appropriately, the cost standard deviation can be lowered - sometimes
	substantially - while the expected cost is only mildly increased. Several
	scenarios entail relatively large cost standard deviations, this is party due to
	the Taguchi-type loss funcion used during the calculations. The process standard
	deviation $\sigma$ has noticeable effect on the critical value only:
	lower critical values should be used for lower standard deviations, as expected.
	
	\paragraph{Sensitising Rules} The effect of sensitising rules\cite{Montgomery}
	was investigated using simulation, since the implementation of theoretical calculations would
	have resulted in a hugely inflated transition matrix which poses a serious
	obstacle in both programming a running times.
	
	Optimal parameters were calculated for $p=0.9$, $\sigma=1$, $s=0.2$, $\delta=2$,
	$\alpha=1$, $\beta=3$, a $=0.01$, b $=1$, $c_s=1$, $c_{rb}=10$, $c_{rs}=10$,
	$c_{rs}=20$. The resulting optimal parameters were $h=0.38$ and $k=1.14$. This
	parameter setup entailed an expected cost of $E(C) = 37.75$ and cost standard
	deviation of $\sigma(C) = 150.33$. The probability of alarm states
	together was $\sum_iP_{r_i}=0.201$. Simulations were run for 50000 sampling
	intervals which equals 19000 unit time. Simulations from the first 100 sampling
	intervals were discarded as it was regarded as a burn-in stage.
	
	First, we present the baseline simulation results - the ones without additional
	rules.
	Overall, the simulation achieved an acceptable level of convergence
	to the theoretical stationary distribution. The empirical expected cost was
	$\overline C = 36.51$. The proportion of alarm states was 0.192. The calculation
	of the empirical standard deviation was carried out by taking into account the
	data of only every 30th sampling interval to deal with the autocorrelation of
	neighboring values. The empirical standard deviation using this method was $s^*
	= 199.37$. It is important to note that the empirical
	results can be somewhat inaccurate, depending on the particular simulation
	results and the parameters used. This is due to the large
	variance and slow convergence of the standard deviation.
	Nonetheless, for this particular scenario the theoretical and empirical results
	were quite close to each other, thus we will compare the effect of sensitising
	rules to this baseline simulation.
	
	The first rule we investigated was the one which produces an alarm in case of
	three consecutive points outside the $\frac{2}{3}k$ warning limit but still
	inside the control limit. Running the simulation
	with the extra rule resulted in $\overline C = 37.42$, $s^* = 171.57$ and a
	ratio of all alarm states together of 0.194, all of these values are within the
	hypothesised confidence interval.
	We can see no major difference in any of these values compared to the baseline.
	
	The second rule was the same as the first one, except this time two consecutive
	points outside the $\frac{2}{3}k$ warning limit were enough to produce an alarm
	signal. The results were $\overline C = 36.54$, $s^* = 190.91$ and 0.200 for the
	proportion of alarm states. Again, no apparent differences can be seen, but it
	is worth noting the proportion of alarm states is somewhat higher in this case
	than the at the baseline or the previous rule, and this was also seen with
	repeated simulation.
	
	Overall, the effect of the investigated sensitising rules seems to be minimal on
	the results. Further investigation is required of the the possible effects in
	case of other parameter setups and rules.

	%----------------------------------------------------------------------------------------
	%	Application
	%----------------------------------------------------------------------------------------
	\section{Application}
	In the following paragraphs we show a real life healthcare example as the
	application of the previously introduced methods. Two approaches will be
	presented: one with and one without taking the standard deviation into account.
	The example problem is to minimise the healthcare burden generated by
	patients with high cardiovascular (CV) event risk. The model is built upon the
	relationship between the low-density lipoprotein (LDL) level and the risk of CV
	events, thus the LDL level is the process of interest.\cite{Boekholdt}
	
	Parameters were estimated from several sources. The list below gives
	information about the meaning of the parameter values in the healthcare
	setting and shows the source of the parameter estimations.
	\begin{center}
		\begin{longtable}{p{3cm}C{3.5cm}p{7cm}}
			\multicolumn{3}{c}{\textbf{Parameter values and sources}}
			\tabularnewline[2ex]
			\toprule
			Parameter value & Meaning & Parameter source \tabularnewline 
			\midrule
			$\mu_0$=3 mmol/l & Target value. & Set according to
			the European guideline for patients at risk.\cite{Garmendia}
			\tabularnewline[1ex]
			\hline
			$\sigma$=0.1 mmol/l & Process standard deviation. &
			Estimated using real life data from
			Hungary, namely registry data through the Healthware Consulting Ltd.
			\tabularnewline[1ex]
			\hline
			$\delta$=0.8/3 & Expected shift size, 0.8 increase
			in LDL per year on average. & Estimated with the help of a
			health professional. \tabularnewline[1ex]
			\hline
			$s$=1/120 & Expected number of shifts in a day, 3 shifts per year on
			average. & Estimated with the help of a
			health professional. \tabularnewline[1ex]
			\hline
			$\alpha = 0.027$, $\beta = 1.15$ & Parameters of the repair size beta
			distribution. & Estimated using an international study which included
			Hungary.\cite{Garmendia} \tabularnewline[1ex]
			\hline
			$q = 0.1$, $z = 30$, & Parameters of the sampling
			probability logistic function. & Patient non-compliance in LDL controlling
			medicine is quite high, and this is represented through the parametrisation
			of the logistic function.\cite{Lardizabal} \tabularnewline[1ex]
			\hline
			$c_s$=\euro5.78 & Sampling cost. & Estimated using the LDL testing cost and
			visit cost in Hungary. \tabularnewline[1ex]
			\hline
			$c_o$=\euro5.30 & Shift-proportional
			daily out-of-control cost. & Estimated using real world data of
			cardiovascular event costs from Hungary
			\tabularnewline[1ex]
			\hline
			$c_{rb}$=\euro11.50 & Base repair cost. & Estimated using the simvastatin
			therapy costs in Hungary \tabularnewline[1ex]
			\hline
			$c_{rs}$=\euro8.63 & Shift-proportional repair cost. & Estimated using the
			simvastatin therapy costs in Hungary \tabularnewline[1ex]
			\bottomrule
		\end{longtable}
	\end{center}
	It is very difficult to give a good estimate for the type and the parameters of
	a distribution that properly models the non-compliance, thus the results here
	can at best be regarded as close approximations to a real life situation. This
	is not a limiting factor, as patients themselves can have vast differences in
	their behaviour, so evaluation of different scenarios are often required, and
	will also be presented here. Since high LDL levels rarely produce noticeable
	symptoms, the sampling probability only depends on the time between samplings,
	thus the sampling probability was modeled by the logistic function and not by
	the beta distribution.\cite{cholesterol} It is important to note that the
	proportional costs increase according to a Taguchi-type loss function, thus huge
	expenses can be generated if the patient's health is highly out-of-control.
	
	\paragraph{Optimisation using only the cost expectation}
	The optimal parameters for the case when the cost standard deviation was not
	taken into account were $56.57$ days and $0.143$ mmol/l for the time between
	samplings and the critical increase in the LDL level from the guideline value
	respectively.
	These parameters entailed an average daily cost of \euro$0.469$ and standard
	deviation of \euro$0.562$.
	This result is interesting, because the optimisation says that we should use a
	somewhat higher critical value than the one according to the guideline - 0
	mmol/l critical increase would be the original 3 mmol/l value - but we should
	monitor the patients more often than usual - times of the LDL measurements are
	usually several months or years apart.
	It is important to note, that this is a strictly cost effective viewpoint which
	could be overwritten by a health professional. Nonetheless, the results provide
	a possible new approach to the therapeutic regime for controlling LDL level.
	Often, it is good to look at the interaction between the parameters and the
	resulting average cost, especially in situations where the optimal parameters
	cannot be used because of real life reasons.
	\begin{figure}[ht]
		\caption{Expected cost as function of  the time between samplings and the
			critical value}
		\includegraphics[scale=0.75]{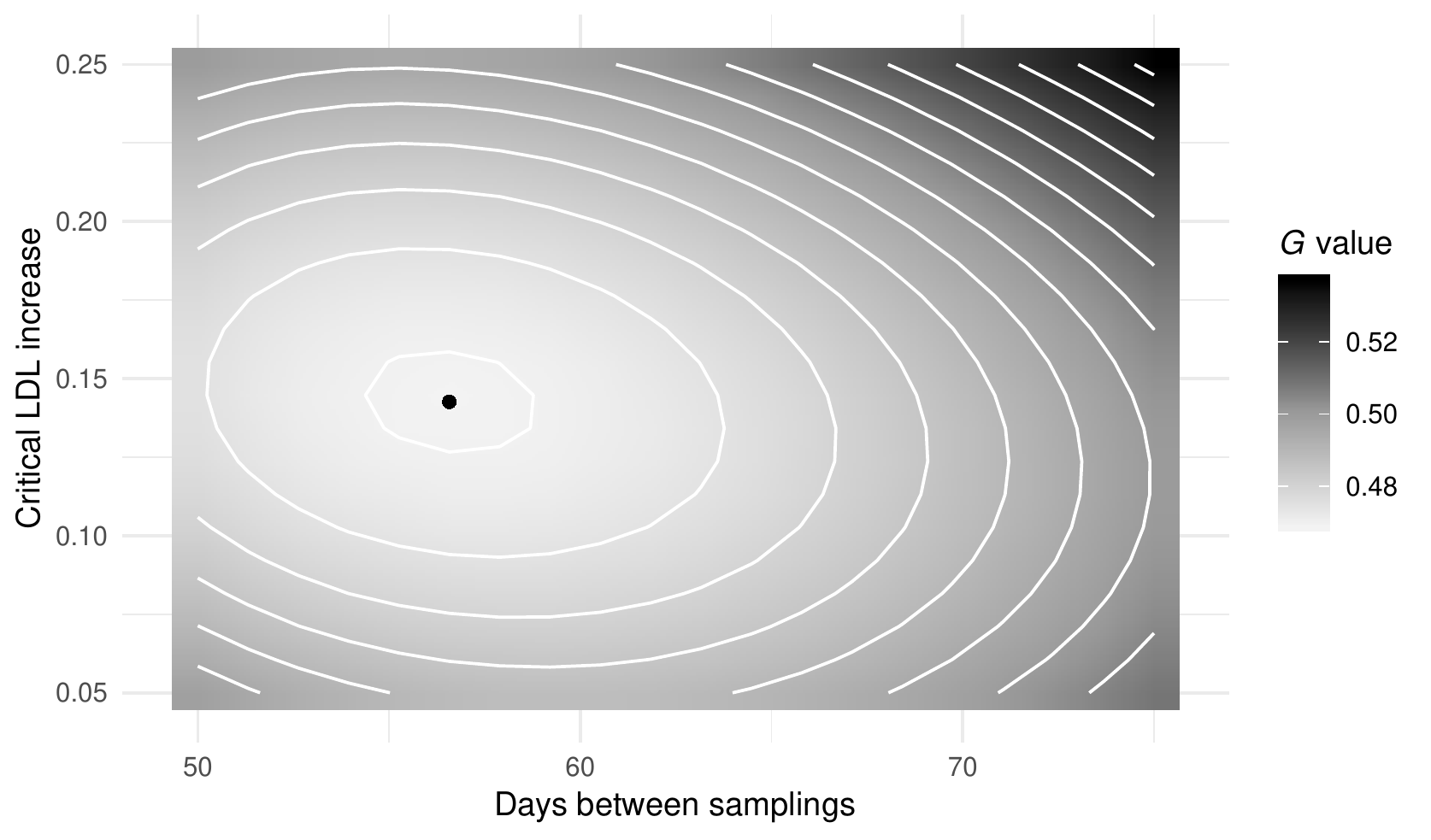}
		\centering
	\end{figure}
	The heat map of Figure 4 shows the
	average cost as the function of the different parameter values.
	The dot in the lightest area of the figure corresponds to the optimal cost. Any
	other point entails a higher average cost.
	It can clearly be seen that too low or high critical values will both increase
	the average daily cost. What is interesting - for this scenario -
	is that the change in the time between samplings entails relatively low change
	in the critical LDL increase: even if the time between control visits is changed
	the critical value should stay around $0.12 - 0.18$ mmol/l.
	
	\paragraph{Optimisation using cost expectation and cost standard deviation}
	In this part, the cost standard deviation is also taken into account with
	$p=0.9$, thus the weight of the standard deviation in the calculation of $G$ is
	0.1.
	The optimal parameters found by our approach were $64.76$ days and $0.129$
	mmol/l for the time between samplings and critical increase in the LDL level
	respectively. These parameters entailed an average daily cost of \euro$0.477$
	and standard deviation of \euro$0.418$. The inclusion of the cost standard
	deviation into the model has somewhat increased the time between control visits
	and decreased the critical value. The expected cost somewhat increased,
	while the cost standard deviation was moderately decreased. Figure 5 shows the
	previous heat map with non-compliance included in the model.
	\begin{figure}[ht]
		\caption{$G$ value as function of  the time between samplings and the
			critical value}
		\includegraphics[scale=0.75]{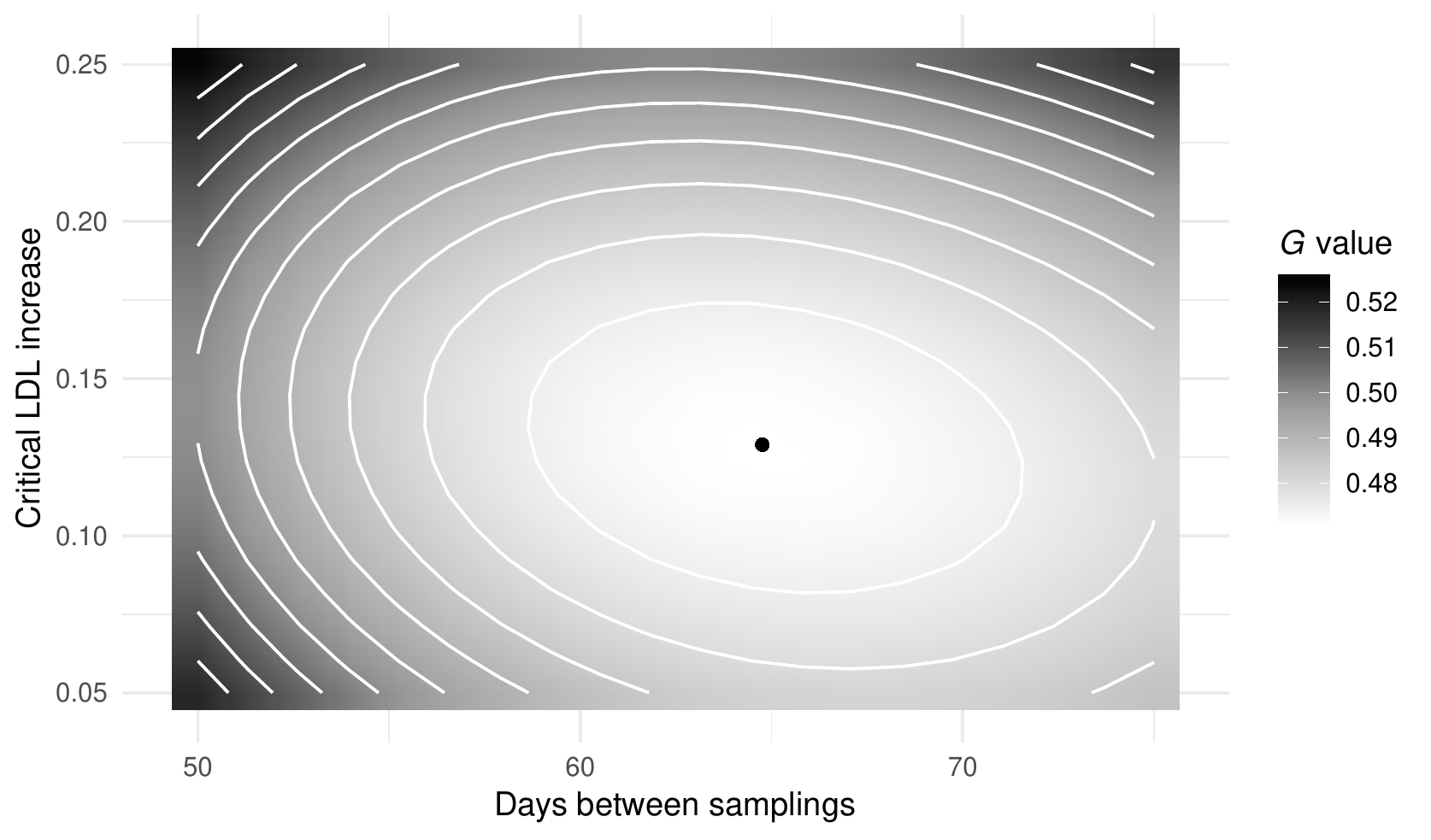}
		\centering
	\end{figure}
	It can be seen that the elliptical shape of the heat map has not changed: the
	change in the time between control visits still does not entail great change in
	the critical value.

	\paragraph{Sensitivity Analysis} As there were uncertainty about the estimation
	of several parameters, it is important to assess the effect of different
	parameter setups. The results for different out-of-control costs are
	plotted for both approaches.
	The results can be seen in Figure 6.
	\begin{figure}[H]
		\caption{Parameters, average total cost and cost standard deviation as function
			of the out-of-control cost}
		\includegraphics[scale=0.75]{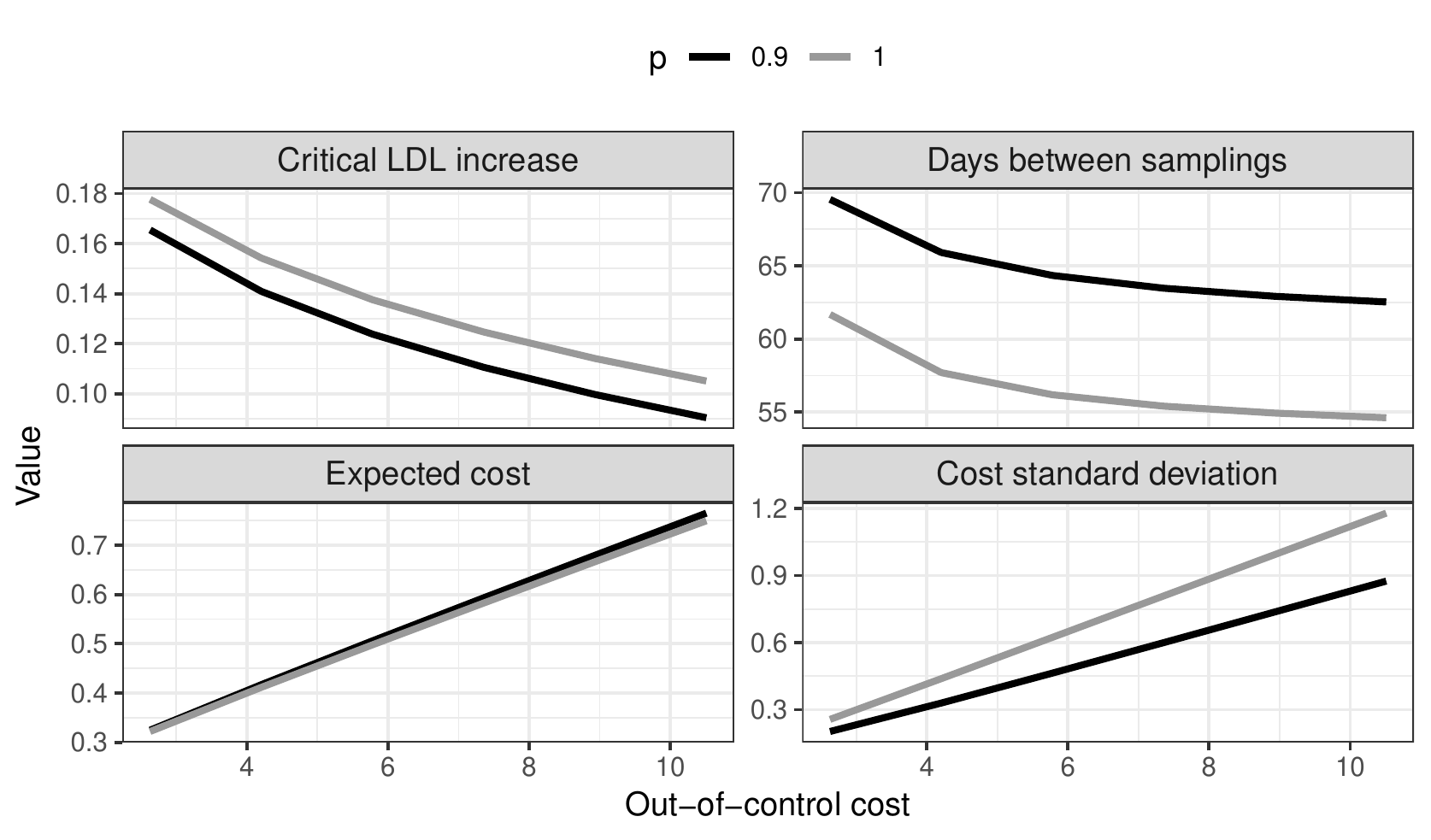}
		\centering
	\end{figure}
	The critical value and time between
	samplings decrease and the average cost and cost standard deviation increase
	with higher out-of-control costs. Just as on the heat maps, one can
	observe here, that if the cost standard deviation is taken into account in the
	optimisation, then the critical value should be lowered and the time between
	samplings increased. One can observe, that a substantial decrease can be
	achieved in the cost standard deviation while the cost expectation barely
	changes.
	
	Uncertainty was also high around the estimation of the sampling probability. The
	sigmoid's midpoint so far was $z=30$ days, meaning that the probability of
	sampling was $0.5$ at $h=30$ and increased with $h$. Figure 7 contains results
	for different $z$ values.
	\begin{figure}[ht]
		\caption{Top: parameters, average total cost and cost standard deviation as
			function of the sigmoid's midpoint ($z$), Bottom left: distance distributions:
			lighter lines corresponds to greater $z$ values, Bottom right: sampling
			probability as function of $z$, for $h=64.76$ \\}
		\includegraphics[scale=0.75]{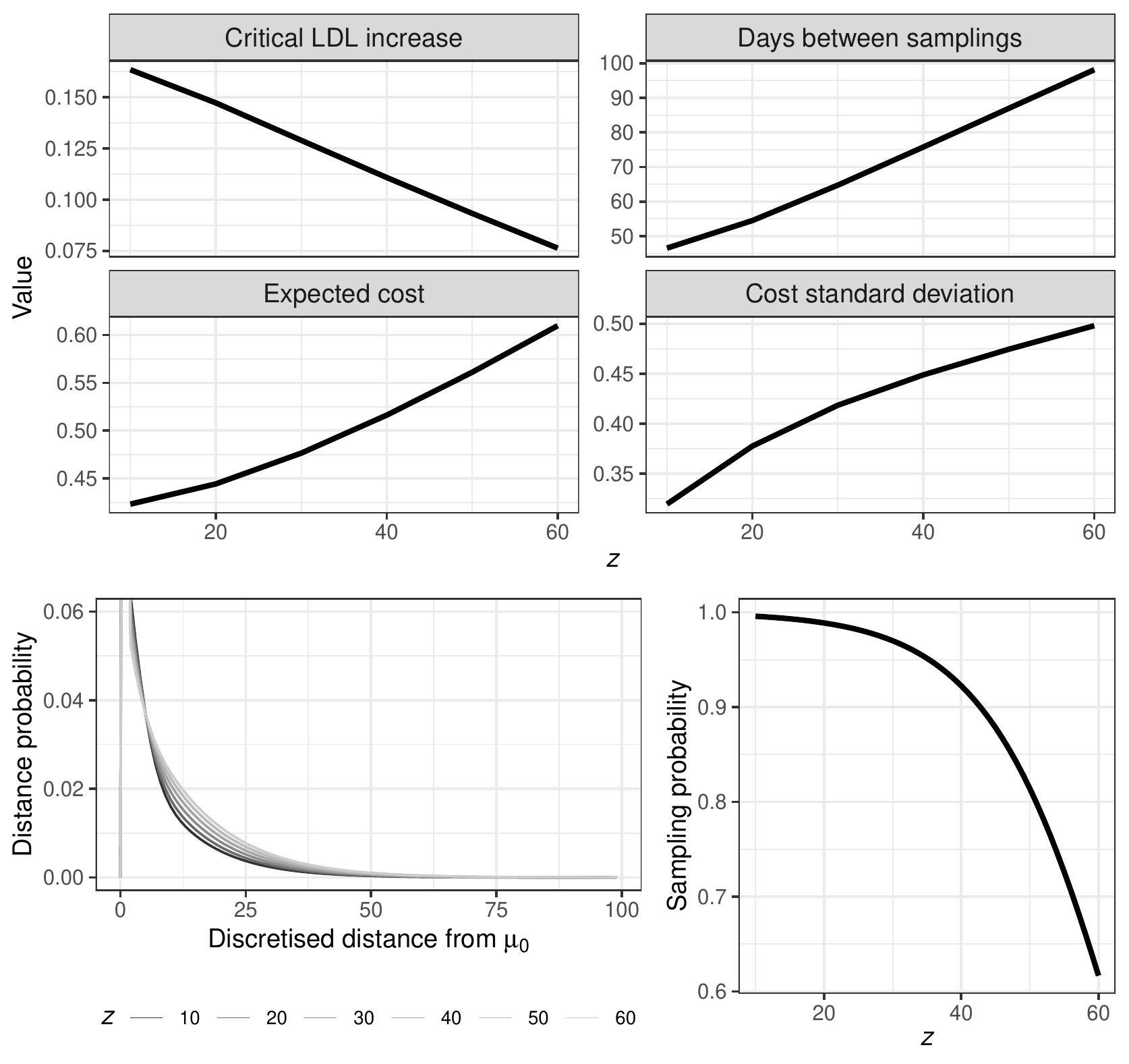}
		\centering
	\end{figure}
	One can observe, that as the probability of successful sampling decreases - the
	value of $z$ is increased - the critical value decreases and the time between
	samplings increases. This can be explained by the increased uncertainty of the
	sampling: More time between samplings entails higher patient compliance,
	and when the visit occurs a low critical value is used to ensure treatment.
	The cost expectation and standard deviation increases with lower sampling
	probabilities. There are only minor differences between the stationary
	distributions, nonetheless it can be seen that lower sampling probability is
	associated with higher probabilities for greater distances from the target
	value. The last panel shows how the sampling probability decreases with
	increasing $z$ values for a fixed $h$.
	
	%----------------------------------------------------------------------------------------
	%	Conclusions
	%----------------------------------------------------------------------------------------
	\section{Conclusions}
	
	Cost-optimal control charts based on predominantly Duncan's cycle model are
	found in a wide variety of areas. Even though the benefit of using these charts in
	industrial and engineering settings is unquestionable, the numerous assumptions
	needed about the processes makes the applicability of the traditional models
	problematic in certain environments. Motivated by the desire to apply
	cost-optimal control charts on processes with erratic behaviour and imperfect
	repair mechanism - such as ones found in healthcare -  this paper presented a
	Markov chain-based framework for the generalisation of these control charts,
	which enabled the loosening of some of the usual assumptions.
	
	Cost-optimisation of control charts are usually carried out by finding the
	optimal critical value, time between samplings and sample size. Our work
	concentrated on the monitoring of a single element at a time - e.g. a patient -
	thus the methods presented here always used a sample size of $1$.
	
	Building on and expanding the work of Zempléni et al.\cite{Zempleni} we
	discussed three types of generalisations: the random shift size, the imperfect
	repair and the non-compliance. The random shift size means that only the
	distribution of the shift size and its parameters are assumed to be known. This
	let us monitor processes which are potentially drifting in nature. The second
	generalisation - the imperfect repair - assumed that the process stays out of
	control even after repair, but on a level closer to the target value than
	before. This type of repair response is often observed in treatments in
	healthcare, but may be found in other areas too. The third generalisation was
	intended to help the modeling of patient or staff non-compliance. We
	implemented this concept in a way that allows sampling times to be skipped by a
	probability governed by a distribution or function with known parameters.
	
	Since the processes modeled with the above loosened assumptions can create
	complicated trajectories between samplings, the mathematical description of
	these was also necessary. We proposed an expectation calculation method of a
	function of the values taken on by the process between samplings. The
	application of this proposition while assuming exponentially distributed shift
	sizes, Poisson distributed event numbers and Taguchi-type loss function yielded
	a compact formula for the expectation.
	
	We implemented our theoretical results in the \textbf{\textsf{R}} programming
	language and investigated the effect of parameter estimation uncertainty on the
	optimal parameters and the resulting expected cost and cost standard deviation.
	We also tested the effect of involving the cost standard deviation in the
	optimisation procedure itself. We found that typically the critical value
	increases and the time between samplings decreases with the expected shift size.
	Also, higher expected shift sizes entail higher expected costs and cost standard
	deviations.
	It was seen that with the increase of the out-of-control cost - in most cases -
	the critical value stagnated, the time between samplings decreased, and the
	expected cost and the cost standard deviation increased. The involvement of
	the cost standard deviation in the optimisation procedure lowered the standard
	deviation while the cost expectation barely increased. We have found no evidence
	that sensitising rules - such as values outside the warning limit - would
	change the results substantially.
	
	We presented an example of real-life application involving low-density
	lipoprotein monitoring. The results indicated that the cost-optimal critical
	value is somewhat higher and the cost-optimal time between control visits is
	less than the ones usually used according to medical guidelines.
	
	In the era of Industry 4.0, the cost-optimal control charts presented here can
	be applied to a wider range of processes than the traditional ones.
	Nonetheless there are still areas worth investigating. One of the features
	still missing is the proper modeling of the repair procedure, since it was
	assumed to be an instantaneous event, which may not be appropriate in many
	situations. The mathematical elaboration of a continuous model involving e.g.
	time series could also be beneficial.
	
	\section{Acknowledgements}
	The authors would like to express their gratitude towards the Healthware
	Consulting Ltd. (Budapest, Hungary) for their help in providing data and
	professional knowledge during the writing of the Application section.
	
	The work was supported by the project EFOP-3.6.2-16-2017-00015, which was
	supported by the European Union and co-financed by the European Social Fund.
	
	%----------------------------------------------------------------------------------------
	%	References
	%----------------------------------------------------------------------------------------

	%----------------------------------------------------------------------------------------
	
\end{document}